\def\3nab{\tilde{\nabla}}
\def\be {\begin{equation}}
\def\ee {\end{equation}}
\def\ba {\begin{eqnarray}}
\def\ea {\end{eqnarray}}
\newtheorem{prop}{Proposition}
\newcommand{\barray}{\begin{array}}
\newcommand{\earray}{\end{array}}
\newcommand{\bse}{\begin{subequations}} \newcommand{\ese}{\end{subequations}}
\begin{document}
\title{Matching conditions in Locally Rotationally Symmetric spacetimes and radiating stars}
\author{Pretty N. Khambule}
\email{khambuleprettyn@yahoo.com}
\affiliation{Astrophysics Research Centre, School of Mathematics, Statistics and Computer Science, University of KwaZulu-Natal, Private Bag X54001, Durban 4000, South Africa.}
 \author{Rituparno Goswami}
\email{Goswami@ukzn.ac.za}
\affiliation{Astrophysics Research Centre, School of Mathematics, Statistics and Computer Science, University of KwaZulu-Natal, Private Bag X54001, Durban 4000, South Africa.}
\author{Sunil D. Maharaj}
\email{Maharaj@ukzn.ac.za}
\affiliation{Astrophysics Research Centre, School of Mathematics, Statistics and Computer Science, University of KwaZulu-Natal, Private Bag X54001, Durban 4000, South Africa.}

\begin{abstract}
We recast the well known Israel-Darmois matching conditions for Locally Rotationally Symmetric (LRS-II) spacetimes using the semitetrad 1+1+2 covariant formalism. This demonstrates how the geometrical quantities including the volume expansion, spacetime shear, acceleration and Weyl curvature of two different spacetimes are related at a general matching surface inheriting the symmetry, which can be timelike or spacelike. 
The approach is purely geometrical and depends on matching the Gaussian curvature of 2-dimensional sheets at the matching hypersurface. This also provides the constraints on the thermodynamic quantities on each spacetime so that they can be matched smoothly across the surface. As an example we regain the Santos boundary conditions and model of a radiating star matched to a Vaidya exterior in general relativity. 
\end{abstract}
 
\pacs{04.20.-q, 04.40.Dg}
\maketitle

\section{Introduction\label{Intro}}
Matching two regions of spacetime across a hypersurface has a long history in general relativistic setting, with several researchers contributing towards the understanding of the nature of matching conditions. One of the principal astrophysical application of this study is modelling a radiating star that must satisfy the Einstein field equations across all regions, without having a discontinuity at the boundary that may make the system unstable. For a review of known results with variety of physical applications see \cite{Bonnor,Herr1,Herr2,Santos}.  An important advance was made by Santos \cite{Santos} who showed that an internal spherically symmetric heat conducting source can be matched across a comoving timelike hypersurface to an external radiating atmosphere modelled by the Vaidya geometry. The internal matter distribution can be generalised to include anisotropic stresses \cite{Herr3,Chan,Herr4,Gov1} and also an electromagnetic field \cite{AKG,Ban,Tik,Gov2}. Similarly, the external radiating atmosphere can also be generalised to a specific combination of radiation and perfect fluids, modelled by the generalised Vaidya metric \cite{Gov3}. Several exact models of radiating stars have been found in recent times \cite{Iva1,Iva2,Iva3,Abebe,Maharaj1,Maharaj2}. Also Abebe and Maharaj \cite{Abebe}, amongst others, have applied the Lie analysis of the differential equations to find new dissipating models with an equation of state. All of these radiating models depend critically on the Santos boundary condition, which is a non-linear differential equation. \\

The matching of two spacetime regions requires that the field equations are continuous across the boundary. Just like in electromagnetism we match the potential and the normal derivative of the potential across the boundary so that the Maxwell equations remain continuous. In general relativity, we match the gravitational potential (the metric) and the Lie derivative along the normal (the extrinsic curvature). In general this is a complicated process that requires introducing intrinsic coordiantes on the hypersurface, calculation of extrinsic curvature and application of the Einstein field equations. The analysis is extremely technical and it is not surprising that researchers largely restrict attention to the spherically symmetric spacetimes with specific matter distributions. The main purpose of this paper are as follows: Firstly we show that a proper matching is possible for a wide class of spacetimes, of which spherical symmetry is a special case. This is achieved in Locally Rotationally symmetric-II (LRS-II)  class of spacetimes with general matter distributions. Secondly we seek to simplify the matching process and use only the Israel-Darmois matching conditions via local semitetrad decomposition of the spacetimes. Thirdly, we establish a purely geometrical basis for the matching. We find the Gaussian curvature of the two dimensional foliation of the matching hypersurface is one of the fundamental quantities that determines the matching. The physically important result of Santos \cite{Santos} is then recovered as a special case. \\

The paper is organised as follows: In section 2 we discuss the Israel-Darmois matching conditions. In the next section we introduce the geometrical semitetrad 1+1+2 decomposition of the spacetimes and write down the field equations for LRS-II spacetimes. In section 4 we write down the matching conditions in terms of the geometrical and thermodynamic 1+1+2 quantities, and their consistency for timelike matching hypersurface. In the next section we follow the same procedure for a spacelike matching hypersurface for completeness. And finally, in section 6, we use this geometrical approach to recover the well known result of matching a radiating spherical star with a Vaidya exterior.

\section{Israel-Darmois matching conditions \label{one}}

Historically, the matching conditions of two spacetimes across a hypersurface was first worked out by Darmois \cite{Darmois}, which was later further developed by Lichnerowicz \cite{Lich}, Israel \cite{Israel}, Clarke and Dray \cite{Clarke}, and Mars, Senovilla and Fayos \cite{Seno,Fayos}. The key idea of this matching problem is as follows: Let us have two disjoint spacetime manifolds,  $\mathcal{V}^+(g^+)$ with an imbedded boundary 3-hypersurface $\mathcal{S}^+$ and $\mathcal{V}^-(g^-)$ with an imbedded boundary 3-hypersurface $\mathcal{S}^-$. We further assume that there is a $\mathcal{C}^3$ diffeomorphism from $\mathcal{S}^-$ to $\mathcal{S}^+$. This means that there is a three times continuously differentiable invertible function which maps from $S^-$ to $S^+$. Let the disjoint union of $\mathcal{V}^-$ to $\mathcal{V}^+$, which have points that are related through the diffeomorphism identified, be the complete spacetime, which we shall denote as $\mathcal{V}_4(g)$. The images  of $\mathcal{S}^-$ and $\mathcal{S}^+$ in $\mathcal{V}_4$ shall be noted by $\mathcal{S}$. This issue now is if  $\mathcal{V}^+$ and $\mathcal{V}^-$ can be joined in such a manner that $\mathcal{V}_4$ has a  Lorentzian geometry with Einstein field equations well defined. As shown clearly in \cite{Israel,Clarke}, this is possible if and only if  $\mathcal{S}^+$ and $\mathcal{S}^-$ are isometrical with respect to their first fundamental form $h^+$ and $h^-$ (the induced metric on the hypersurface via the imbedding) which have been derived from $g^+$ and $g^-$ respectively, as in this case there is a natural continuous extension of the metric $g$ to the entire $\mathcal{V}_4$.\\

Thus, from the point of view of $\mathcal{V}^+(g^+)$ and $\mathcal{V}^-(g^-)$, there are two imbeddings: $x^{\mu}_{\pm}=x^{\mu}_{\pm}(\xi^a)$ of $\mathcal{S}$, where $\xi^a$ are intrinsic coordinates for  $\mathcal{S}$ and $x^{\mu}_{\pm}$ are local coordinates for $\mathcal{V}^{\pm}$. The requirement that the first fundamental forms must match is
\begin{eqnarray}\label{FFF}
h^{+}_{ab}=h^{-}_{ab}\;,
\end{eqnarray}
where from \cite{Israel,Clarke,Seno}
\begin{eqnarray}
h^{\pm}_{ab}=g^{\pm}_{\mu v}(x_{\pm}(\xi))\frac{\partial x^{\mu}_{\pm}(\xi) }{\partial \xi^a}\frac{\partial x^{\mu}_{\pm}(\xi)}{\partial \xi^b}.
\end{eqnarray}
Note that $h_{ab}$ is the 3-metric on $\mathcal{S}$.\\

Equation (\ref{FFF}) is an important condition for the calculation of the Riemann tensor distribution and its contractions. The singular part of this tensor distribution is proportional to the Dirac one-form distribution $\delta_\mu$ which is linked with $\mathcal{S}$. Therefore this singular part describes an {\it infinite discontinuity} at $\mathcal{S}$. For a smooth matching, these infinite discontinuities needs to be avoided in  matter and curvature tensors. This occurs if and only if the second fundamental form of $\mathcal{S}$ match, that is
\begin{eqnarray}\label{SFF1}
\chi^{-}_{ab}=\chi^{+}_{ab},
\end{eqnarray}
where
\begin{eqnarray}\label{SFF2}
\chi^{\pm}_{ab}=-n^{\pm}_{\mu}\left(\frac{\partial^2x^{\mu}_{\pm}(\xi)}{\partial\xi^a\partial\xi^b}+\Gamma^{\pm \mu}_{pv}\frac{\partial x^{p}_{\pm}(\xi)}{\partial \xi^a}\frac{\partial x^v_{\pm}(\xi)}{\partial\xi^b}\right).
\end{eqnarray}
Note that $\Gamma^{\mu}_{\rho \nu}$ is  the Christoffel symbol of the second kind, and it represents the metric connection coefficients which are given by
\begin{eqnarray}
\Gamma^{\mu}_{\rho \nu}=\frac{1}{2}g^{\mu\lambda}(g_{\nu\lambda,\rho}+g_{\lambda \rho,\nu}-g_{\rho \nu,\lambda}),
\end{eqnarray}
where a comma denotes partial differentiation. Thus to match two spacetimes across their common boundary, the matching conditions (\ref{FFF}) and (\ref{SFF1}) must be satisfied.
\\

In the subsequent sections we recast the above matching conditions in terms of geometrical and thermodynamic variables for LRS-II spacetimes, using semi-tetrad 1+1+2 covariant formalism. This gives a beautiful physical interpretation of the quantities that should be continuous across the matching surface $\mathcal{S}$, where $\mathcal{S}$ can be timelike or spacelike.

\section{Semi-tetrad decomposition of LRS-II spacetimes}\label{two}

A spacetime manifold ($\mathcal{M},g$) is called $\textit{locally isotropic}$, if every point  $p \in (\mathcal{M},g)$ has a continuous non-trivial isotropy group. When this group consists of spatial rotations the spacetime is called \textit{locally rotationally symmetry } or LRS \cite{ElstEllis}. Within LRS spacetime, there exists a unique, preferred spatial direction at each point and this preferred direction is covariantly  defined. This direction results in a local axis of symmetry, such that all  observations are identical under rotation about it. LRS-II is a subclass of LRS spacetimes that is free of rotation. By the symmetry of LRS-II, we can covariantly decompose the spacetimes using using a unit timelike vector $u^a$ along the fluid flow lines and a unit spacelike vector $e^a$ along the preferred spatial direction.
\\

With respect to a timelike congruence, the spacetime can be locally decomposed into space and time. The timelike congruence is defined by the flow lines, with the \textit{four-velocity} as
\begin{eqnarray}\label{0.1}
u^a=\tfrac{dx^a}{d\tau},&& u^au_a=-1,
\end{eqnarray}
where $\tau$ is the \textit{proper time} measured along the flow lines. Given  the 4-velocity $u^a$, we have  unique parallel and orthogonal \textit{projection tensor}
\begin{equation}
U^a_b=- u^au_b,\;\;{\rm{h}}^a_b=g^a_b+u^au_b,
\end{equation}
where ${\rm h}^a_b$ is the projection tensor that projects any 4-d vector or tensor onto the 3-space orthogonal to `$u^a$' . The volume element of this 3-space is given as $\epsilon_{abc}=\eta_{abcd}u^d$, where $\eta_{abcd}$ is the usual volume element of 4-space.

This decomposition naturally gives two directional derivatives: the vector $u^a$ is used  to define the  covariant time derivative along the flow lines (denoted by dot) for any tensor $S^{a..b}{}_{c..d}$, given by
\begin{eqnarray}
\dot{S}^{a..b}{}_{c..d}=u^e\nabla_eS^{a..b}{}_{c..d},
\end{eqnarray}
and the tensor ${\rm h}_{ab}$ is used to define the fully \textit{orthogonally projected covariant} derivative D for any tensor $S^{a..b}{}_{c..d}$
\begin{eqnarray}
D_eS^{a..b}{}_{c..d}= {\rm h}^a_f {\rm h}^p_c...{\rm h}^b_g{\rm h}^q_d{\rm h}^r_e\nabla_rS^{f..g}_{p..q},
\end{eqnarray}
with total projection on all free indices. Therefore, the covariant derivative of $u^a$ is decomposed as
\begin{eqnarray}
\nabla_au_b=-u_aA_b+\tfrac{1}{3}\Theta {\rm h}_{ab}+\sigma_{ab}+\epsilon_{abc}\omega^c,
\end{eqnarray}
where $A_{b}=\dot{u}_b$ is the acceleration, $\Theta=D_au^a$ is the expansion, $\sigma_{ab}=D_{\langle a}u_{b\rangle}$ is the shear tensor that denotes the distortion and $\omega^c$ is the vorticity vector denoting the rotation. The Weyl curvature tensor is also split relative to $u^a$ into the \textit{electric} and \textit{magnetic Weyl curvature} parts as 
\begin{eqnarray}
E_{ab}=C_{abcd}u^cu^d=E_{\langle ab \rangle};&& H_{ab}=\tfrac{1}{2}\epsilon_{ade}C^{de}_{bc}u^c=H_{\langle ab \rangle}.
\end{eqnarray}
Here the angle brackets denote the projected trace-free part. Similarly, the energy momentum tensor of matter is decomposed as follows:
\begin{eqnarray}
T_{ab}=\mu u_au_b+q_au_b+q_bu_a+p{\rm h}_{ab}+\pi_{ab},
\end{eqnarray}
where $\mu=T_{ab}u^au^b$ is the energy density, $p=(1/3){\rm h}^{ab}T_{ab}$ is the isotropic pressure, $q_a=q_{\langle a\rangle}=-{\rm h}^c_aT_{cd}u^d$ is the 3-vector that defines the heat flux and $\pi_{ab}=\pi_{\langle ab \rangle}$ is the anisotropic stress.\\

The 1+1+2 decomposition of spacetime is a natural extension of the 1+3 formalism in which the 3-space is further decomposed with respect to a given spatial direction. In the 1+1+2 approach, the spacetime is further split through the use of a preferred spatial vector  $e^a$  which is orthogonal to $u^a$. We now choose a spacelike  vector field $e^a$ such that 
\begin{eqnarray}\label{0.2}
e^ae_a=1, && u^ae_a=0\;.
\end{eqnarray}
 The new projection tensor is given by 
\begin{eqnarray}
N^a_b\equiv {\rm h}^a_b-e^ae_b=g^a_b+u^au_b-e^ae_b
\end{eqnarray}
which project vectors orthogonal to $e^a$ and $u^a$ onto a 2-surface called \textit{sheet}. Thus
\begin{eqnarray}
 e^aN_{ab}=0=u^aN_{ab}, && N^a_a=2.
\end{eqnarray} 
This 1+1+2 splitting of the spacetime gives rise to the new directional derivatives along $e^a$ and on the 2-surface:
\begin{itemize}
	\item The\textit{ hat derivative} is the spatial derivative along the vector field $e^a$: for any 3-tensor $\psi_{a..b}{}^{c..d}$, \ $\hat{\psi}_{a..b}{}^{c..d}\equiv e^fD_f\psi_{a..b}{}^{c..d}$.
	\item The \textit{delta derivative} is the projected derivative onto the sheet by $N_a^b$, with  projection on all free indices: for any 3-tensor $\psi_{a..b}{}^{c..d}$, \  $\delta_e\psi_{a..b}
	{}^{c..d}\equiv N_a^f..N_b^gN_h^c..N_i^dN_e^jD_j\psi_{f..g}{}^{h..i}$. 
\end{itemize}
In the 1+1+2 splitting , the  4-acceleration, vorticity and shear split as follows:
\begin{eqnarray}
\dot{u}^a&=&\mathcal{A}e^a+\mathcal{A}^a,\label{0.3}\\
\omega^a&=&\Omega e^a+\Omega^a,\\
\sigma_{ab}&=&\Sigma(e_ae_b-\tfrac{1}{2}N_{ab})+2\Sigma_{(a} e_{b)}+\Sigma_{ab}.
\end{eqnarray}
For the electric and magnetic  Weyl tensors we get 
\begin{eqnarray}
E_{ab}&=&\mathcal{E}(e_ae_b-\tfrac{1}{2}N_{ab})+2\mathcal{E}_{(a}e_{b)}+\mathcal{E}_{ab},\\
H_{ab}&=&\mathcal{H}(e_ae_b-\tfrac{1}{2}N_{ab})+2\mathcal{H}_{(a}e_{b)}+\mathcal{H}_{ab}.
\end{eqnarray}
Similarly, the fluid variable $q^a$ and $\pi_{ab}$ may be split as
\begin{eqnarray}
q^a&=&Qe^a+Q^a,\\
\pi_{ab}&=&\Pi(e_ae_b-\tfrac{1}{2}N_{ab})+\Pi_{(a}e_{b)}+\Pi_{ab}.
\end{eqnarray}

We can decompose the covariant derivative of $e^a$ in the direction orthogonal to $u^a$ into it's irreducible parts giving 
\be 
{\rm D}_{a}e_{b} = e_{a}a_{b} + \frac{1}{2}\phi N_{ab} + 
\xi\epsilon_{ab} + \zeta_{ab}~.
\ee
We see that on the 3-space,  moving along the preferred vector $ e^{a} $, $\phi$ represents the \textit{expansion of the sheet},  $\zeta_{ab}$ is the \textit{shear of $e^{a}$} (i.e. the distortion of the sheet) and $a^{a}$ its \textit{acceleration}. We can also interpret $\xi$ as the \textit{vorticity} associated with $e^{a}$ so that it is a representation of the ``twisting'' or rotation of the sheet.

The full covariant derivatives of vector fields $e_a$ and $u_a$ can now split into 
\begin{eqnarray}
\nabla_{a}u_b&=&-u_a(\mathcal{A}e_b+\mathcal{A}_b)+e_ae_b(\tfrac{1}{3}\Theta+\Sigma)+\Omega\varepsilon_{ab}\nonumber\\
 &&+ e_a(\Sigma_{b}+\varepsilon_{bc}\Omega^c)+e_b(\Sigma_a-\epsilon_{ac}\Omega^c)\nonumber\\
&&+N_{ab}(\tfrac{1}{3}\Theta-\tfrac{1}{2}\Sigma)+\Sigma_{ab},\label{1.1}\\
\nabla_{a}e_b&=&-\mathcal{A}u_au_b-u_a\alpha_b+e_au_b(\tfrac{1}{3}\Theta+\Sigma)+\xi\varepsilon_{ab}\nonumber\\
&&+u_b(\Sigma_a-\epsilon_{ac}\Omega^c)+e_aa_b+\tfrac{1}{2}\phi N_{ab}+\zeta_{ab}.\label{1.2}
\end{eqnarray}
By symmetry of LRS-II spacetimes we can easily see that the sheet components of all the vectors and tensor quantities vanish identically. Thus, the set quantities that fully describe LRS-II spacetime are $\left\lbrace\mathcal{A},\Theta,\phi,\Sigma,\mathcal{E},\mu,p,\Pi,Q\right\rbrace$.\\

Decomposing the Ricci identities for $u^a$ and $e^a$ and the doubly contracted Bianchi identities, we can get the following field equation for LRS spacetime:\\\\
\textit{Propagation}:
\begin{eqnarray}
\hat{\phi}&=&-\tfrac{1}{2}\phi^2+(\tfrac{1}{3}\Theta+\Sigma)(\tfrac{2}{3}\Theta-\Sigma)\nonumber\\
&&-\tfrac{2}{3}(\mu+\Lambda)-\mathcal{E}-\tfrac{1}{2}\Pi,\\
\hat{\Sigma}-\tfrac{2}{3}\hat{\Theta}&=&-\tfrac{3}{2}\phi\Sigma-Q,\\
\hat{\mathcal{E}}-\tfrac{1}{3}\hat{\mu}+\tfrac{1}{2}\hat{\Pi}&=&-\tfrac{3}{2}\phi(\mathcal{E}+\tfrac{1}{2}\Pi)+(\tfrac{1}{2}\Sigma-\tfrac{1}{3}\Theta)Q.
\end{eqnarray}
\textit{Evolution}:
\begin{eqnarray}
\dot{\phi}&=&-(\Sigma-\tfrac{2}{3}\Theta)(\mathcal{A}-\tfrac{1}{2}\phi)+Q,\\
\dot{\Sigma}-\tfrac{2}{3}\dot{\Theta}&=&-\mathcal{A}\phi+2(\tfrac{1}{3}\Theta-\tfrac{1}{2}\Sigma)^2\nonumber\\
&&+\tfrac{1}{3}(\mu+3p-2\Lambda)-\mathcal{E}+\tfrac{1}{2}\Pi,\\
\dot{\mathcal{E}}-\tfrac{1}{3}\dot{\mu}+\tfrac{1}{2}\dot{\Pi}&=&+(\tfrac{3}{2}\Sigma-\Theta)\mathcal{E}+\tfrac{1}{4}(\Sigma-\tfrac{2}{3}\Theta)\Pi\nonumber\\
&&+\tfrac{1}{2}\phi Q-\tfrac{1}{2}(\mu+p)(\Sigma-\tfrac{2}{3}\Theta).
\end{eqnarray} 
\textit{Propagation/evolution}:
\begin{eqnarray}
\mathcal{\hat{A}}-\dot{\Theta}&=&- (\mathcal{A}+\phi)\mathcal{A}+\tfrac{1}{3}\Theta^2+\tfrac{3}{2}\Sigma^2\nonumber\\
&&+\tfrac{1}{2}(\mu+3p-2\Lambda),\\
\dot{\mu}+\hat{Q}&=&-\Theta(\mu+p)-(\phi+2\mathcal{A})Q-\tfrac{3}{2}\Sigma\Pi\\
\dot{Q}+\hat{p}+\hat{\Pi}&=&-(\tfrac{3}{2}\phi+\mathcal{A})\Pi-(\tfrac{4}{3}\Theta+\Sigma)Q\nonumber\\
&&-(\mu+p)\mathcal{A}.
\end{eqnarray}
The Gaussian curvature $K$ of the 2-sheet is defined in terms of the Ricci scalar of the two sheet as $^2R_{ab}=KN_{ab}$ and can be written in terms of the covariant scalars as
\begin{eqnarray}
K=\tfrac{1}{3}(\mu+\Lambda)-\mathcal{E}-\tfrac{1}{2}\Pi+\tfrac{1}{4}\phi^2-(\tfrac{1}{3}\Theta-\tfrac{1}{2}\Sigma)^2.
\end{eqnarray}
 Thus, the evolution and propagation equation for the Gaussian curvature $K$ are
 \begin{eqnarray}
\dot{ K}&=&-(\tfrac{2}{3}\Theta-\Sigma)K,\\
\hat{ K}&=&-\phi K.
 \end{eqnarray}
 
\section{Matching Conditions for LRS-II spacetimes: Timelike matching surface}\label{four}

Let us consider two regions of spacetime, both having LRS-II symmetry and they are matched across a non-compact hypersurface $S$. We consider $S$ to inherit the LRS II  symmetry, in the sense that all sheet components vanish identically on $S$. In the case of spherical symmetry we can then easily describe $S$ as a curve on $[u,e]$ plane, with each point on the curve representing the surface of a 2-sheet. We describe the two matching regions as region 1 and region 2 respectively. All the geometrical and thermodynamic quantities in region 2 will be denoted by the usual variables with a tilde. For example:
the metric tensor in region 1 is given by
\begin{align}\label{1}
g_{ab}=-u_{a}u_{b}+e_{a}e_{b}+N_{ab}.
\end{align}
while that of region 2 is given by
\begin{align}\label{2}
\tilde{g}_{ab}=-\tilde{u}_{a}\tilde{u}_{b}+\tilde{e}_{a}\tilde{e}_{b}+\tilde{N}_{ab}.
\end{align}
We will now consider two distinct cases: (a) when the normal to the matching hypersurface $S$ is spacelike and (b) when it is timelike. We will explicitly write down the Israel-Darmois condition for all two cases and extract the scalar equations in terms of the geometrical variablles of LRS-II spacetimes from them. This will then transparently show us, the behaviour of these geometrical and thermodynamic quantities across the hypersurface.  

\begin{table}[!h]
  \caption{Geometry of Region 1 and Region 2}
\begin{center}
\begin{tabular}{|c|c|}\hline\hline
 {\bf REGION 1}& {\bf REGION 2} \\ \hline
$g_{ab}=-u_{a}u_{b}+e_{a}e_{b}+N_{ab}$ &  $\tilde{g}_{ab}=-\tilde{u}_{a}\tilde{u}_{b}+\tilde{e}_{a}\tilde{e}_{b}+\tilde{N}_{ab}$\\ \hline
$n_{a}=\alpha u_{a}+\beta e_{a}$ ; $\beta=\pm\sqrt{1+\alpha^2}$ & $\tilde{n}_{a}=\tilde{\alpha} \tilde{u}_{a}+\tilde{\beta} \tilde{e}_{a}$ ; $\tilde\beta=\pm\sqrt{1+\tilde\alpha^2}$\\ \hline
$h_{ab}= g_{ab}-n_{a}n_{b}$ & $\tilde{h}_{ab}=\tilde{g}_{ab}-\tilde{n}_{a}\tilde{n}_{b}$\\ \hline
$\chi_{ab}= h^{c}_{(a}h^{d} _{b)}\nabla_{d}n_{c} $ & $\tilde{\chi}_{ab}= \tilde{h}^{c}  _{(a}\tilde{h}^{d} _{b)}\nabla_{d}\tilde{n}_{c}$ \\ \hline
$\dot\Psi=u^a\nabla_a\Psi\;;\;\hat\Psi=e^aD_a\Psi$ & $\mathring{\tilde{\Psi}}=\tilde{u}^a\nabla_{a}\tilde{\Psi}\;;\; \bar{\tilde{\Psi}}=\tilde{e}^aD_a\tilde{\Psi}$\\ \hline
\hline
\end{tabular}
\end{center}
\end{table}

Let us first suppose the unit normal to $S$ is spacelike. This situation corresponds to the boundary matching hypersurface of a dynamic spherical star with the exterior spherically symmetric spacetime.
Let $n_{a}$ be the unit normal in Region 1 to the matching timelike hypersurface $S$. Since $S$ is taken to inherit the LRS-II  symmetry, the unit normal is given as
\begin{align}\label{3}
n_{a}=\alpha u_{a}+\beta e_{a},
\end{align}
where $\alpha$ and $\beta$ are functions of the curve parameters of integral curves of $u^a$ and $e^a$ respectively, with the normalisation condition $\beta=\pm\sqrt{1+\alpha^2}$.
Similarly in  Region 2, let $\tilde{n}_{a}$ be the unit normal to $S$. Thus we have 
\begin{align}\label{4}
\tilde{n}_{a}=\tilde{\alpha} \tilde{u}_{a}+\tilde{\beta} \tilde{e}_{a},
\end{align}
where $\tilde\alpha$ and $\tilde\beta$ are functions of the curve parameters of integral curves of $\tilde u^a$ and $\tilde e^a$ respectively, with the normalisation condition $\tilde\beta=\pm\sqrt{1+\tilde\alpha^2}$. Since both $n_{a}$ and $\tilde{n}_{a}$ are spacelike, we make use of \eqref{1} and \eqref{3}, to get the first fundamental form on the boundary $S$ is in Region 1 as
\begin{align}
h_{ab}&= g_{ab}-n_{a}n_{b}\nonumber\\
&=-(1+\alpha^{2})u_{a}u_{b}+(1-\beta^{2})e_{a}e_{b}-\alpha\beta u_{a}e_{b}-\alpha\beta e_{a}u_{b}+N_{ab},\label{6}
\end{align}
where $\beta=\pm\sqrt{1+\alpha^2}$.
Likewise, using \eqref{2} and \eqref{4}, the first fundamental form on $S$ in Region 2 is	
\begin{align}
\tilde{h}_{ab}&=\tilde{g}_{ab}-\tilde{n}_{a}\tilde{n}_{b}\label{7}\\
&=-(1+\tilde{\alpha}^{2})\tilde{u}_{a}\tilde{u}_{b}+(1-\tilde{\beta}^{2})\tilde{e}_{a}\tilde{e}_{b}-\tilde{\alpha}\tilde{\beta }\tilde{u}_{a}\tilde{e}_{b}-\tilde{\alpha}\tilde{\beta } \tilde{e}_{a}\tilde{u}_{b}+\tilde{N}_{ab},\label{8}
\end{align}
with the constraint $\tilde\beta=\pm\sqrt{1+\tilde\alpha^2}$. 
	
Now the second fundamental form in Regions 1 and 2 are defined as follows:
\begin{align}\label{10}
\chi_{ab}= h^{c}_{(a}h^{d} _{b)}\nabla_{d}n_{c} \;\;;\;\; \tilde{\chi}_{ab}= \tilde{h}^{c}  _{(a}\tilde{h}^{d} _{b)}\nabla_{d}\tilde{n}_{c}.
\end{align}
Noting that for any scalar $\lambda$ in a spacetime with LRS-II symmetry, we can write
\begin{align}\label{17}
\nabla_{a}\lambda=-\dot{\lambda}u_{a}+\hat{\lambda}e_{a},
\end{align}
Using \eqref{17} we can immediately see that
\begin{align}\label{20}
\nabla_{b}n_{a}&=\left(-\dot{\alpha}-\beta\mathcal{A}\right)u_{a}u_{b}+\left[\alpha\left(\tfrac{1}{3}\Theta+\Sigma\right)+\beta\right]e_{a}e_{b}+\left[\beta\left(\tfrac{1}{3}\Theta+\Sigma\right)+\hat{\alpha}\right]u_{a}e_{b}\nonumber\\
&+\left(-\dot{\beta}-\alpha\mathcal{A}\right)e_{a}u_{b}+\left[\alpha\left(\tfrac{1}{3}\Theta-\tfrac{1}{2}\Sigma\right)+\tfrac{1}{2}\beta\phi\right]N_{ab}.
\end{align} 
Therefore using \eqref{6} and \eqref{10}, we have 
\begin{eqnarray}\label{21}
\chi_{ab}&=&u_{a}u_{b}\left[-(1+\alpha^{2})^2\dot{\alpha}+\alpha\beta(1+\alpha^{2})\dot{\beta}-\alpha\beta(1+\alpha^{2})\hat{\alpha}\right.\nonumber\\
&&\left.+\alpha^{2}\beta^{2}\hat{\beta}-\beta \mathcal{A}(1+\alpha^{2})-\alpha\beta^{2}\left(\tfrac{1}{3}\Theta+\Sigma\right)\right]+e_{a}e_{b}\left[-\alpha^{2}\beta^{2}\dot{\alpha}\right.\nonumber\\
&&\left.-\alpha\beta(1-\beta^{2})\dot{\beta}+\alpha\beta(1-\beta^{2})\hat{\alpha}+(1-\beta^{2})^{2}\hat{\beta}-\alpha^{2}\beta\mathcal{A}\right.\nonumber\\
&&\left.+\alpha(1-\beta^{2})(\tfrac{1}{3}\Theta+\Sigma)\right]+ u_{(a}e_{b)}\left[-2\alpha\beta(1+\alpha^{2})\dot{\alpha}-2\alpha\beta(1-\beta^2)\hat{\beta}\right.\nonumber\\
&&\left.+(-\alpha^2\beta^2+(1+\alpha^2)(1-\beta^2))\hat{\alpha}+(\alpha^2\beta^2-(1+\alpha^2)(1-\beta^2))\dot{\beta}\right.\nonumber\\
&&\left.-\alpha\mathcal{A}(\beta^2+(1+\alpha^2))+(\beta(1-\beta^{2})-\alpha^2\beta)\left(\tfrac{1}{3}\Theta+\Sigma\right)\right]\nonumber\\
&&+N_{ab}\left[\alpha\left(\tfrac{1}{3}\Theta-\tfrac{1}{2}\Sigma\right)+\tfrac{1}{2}\beta\phi\right],
\end{eqnarray}
with the constraint $\beta=\pm\sqrt{1+\alpha^2}$. We have a similar result for Region 2. First we replace the dot and hat derivative with the circle and bar derivative respectively. The circle derivative arises from using the operator $\tilde{u}^a\nabla_{a}$ while the bar derivative arises from using the operator $\tilde{e}^aD_a$. Thus for Region 2 \eqref{17} becomes
\begin{align}\label{a1}
\nabla_{a}\tilde{\lambda}=-\mathring{\tilde{\lambda}}u_{a}+\bar{\tilde{\lambda}}e_{a},
\end{align}
Applying \eqref{a1}, \eqref{10} becomes
\begin{eqnarray}\label{22}
\tilde{\chi}_{ab}&=&\tilde{u}_{a}\tilde{u}_{b}\left[-(1+\tilde{\alpha}^{2})^2\mathring{\tilde{\alpha}}+\tilde{\alpha}\tilde{\beta}(1+\tilde{\alpha}^{2})\mathring{\tilde{\beta}}-\tilde{\alpha}\tilde{\beta}(1+\tilde{\alpha}^{2})\bar{\tilde{\alpha}}\right.\nonumber\\
&&\left.+\tilde{\alpha}^{2}\tilde{\beta}^{2}\bar{\tilde{\beta}}-\tilde{\beta} \mathcal{\tilde{A}}(1+\tilde{\alpha}^{2})-\tilde{\alpha}\tilde{\beta}^{2}\left(\tfrac{1}{3}\tilde{\Theta}+\tilde{\Sigma}\right)\right]+\tilde{e}_{a}\tilde{e}_{b}\left[-\tilde{\alpha}^{2}\tilde{\beta}^{2}\mathring{\tilde{\alpha}}\right.\nonumber\\
&&\left.-\tilde{\alpha}\tilde{\beta}(1-\tilde{\beta}^{2})\mathring{\tilde{\beta}}+\tilde{\alpha}\tilde{\beta}(1-\tilde{\beta}^{2})\bar{\tilde{\alpha}}+(1-\tilde{\beta}^{2})^{2}\bar{\tilde{\beta}}-\tilde{\alpha}^{2}\tilde{\beta}\mathcal{\tilde{A}}\right.\nonumber\\
&&\left.+\tilde{\alpha}(1-\tilde{\beta}^{2})(\tfrac{1}{3}\tilde{\Theta}+\tilde{\Sigma})\right]+ \tilde{u}_{(a}\tilde{e}_{b)}\left[-2\tilde{\alpha}\tilde{\beta}(1+\tilde{\alpha}^{2})\mathring{\tilde{\alpha}}-2\tilde{\alpha}\tilde{\beta}(1-\tilde{\beta}^2)\bar{\tilde{\beta}}\right.\nonumber\\
&&\left.+(-\tilde{\alpha}^2\tilde{\beta}^2+(1+\tilde{\alpha}^2)(1-\tilde{\beta}^2))\bar{\tilde{\alpha}}+(\tilde{\alpha}^2\tilde{\beta}^2-(1+\tilde{\alpha}^2)(1-\tilde{\beta}^2))\mathring{\tilde{\beta}}\right.\nonumber\\
&&\left.-\tilde{\alpha}\mathcal{\tilde{A}}(\beta^2+(1+\tilde{\alpha}^2))+(\tilde{\beta}(1-\tilde{\beta}^{2})-\tilde{\alpha}^2\tilde{\beta})\left(\tfrac{1}{3}\tilde{\Theta}+\tilde{\Sigma}\right)\right]\nonumber\\
&&+\tilde{N}_{ab}\left[\tilde{\alpha}\left(\tfrac{1}{3}\tilde{\Theta}-\tfrac{1}{2}\tilde{\Sigma}\right)+\tfrac{1}{2}\tilde{\beta}\tilde{\phi}\right],
\end{eqnarray}
with the constraint $\tilde\beta=\pm\sqrt{1+\tilde\alpha^2}$.\\

On the boundary hypersurface: the Israel-Darmois matching conditions are now given as
\begin{equation}\label{112Match}
h_{ab}=\tilde{h}_{ab}\;\;;\;\; \chi_{ab}=\tilde{\chi}_{ab}.
\end{equation}

\subsection{Extracting the scalar equations: Matching the geometrical quantities}

To extract the relevant 1+1+2 scalar equations from the above conditions (\ref{112Match}), we note the following important observations: keeping with the symmetry of LRS spacetime on both sides of the matching hypersurface, we must have the metric on the 2 dimensional sheets foliating the matching 3-surface exactly same as calculated from the both sides. That is, we must have 
\begin{align}\label{a4}
N_{ab}=\tilde{N}_{ab},
\end{align}
on the boundary. In other words, the spherical 2-surfaces are the same on the boundary if we approach it from either side. This gives two scalar equations that must be satisfied on the boundary:
\begin{equation}\label{M1}
N^{ab}\chi_{ab}=\tilde{N}^{ab}\tilde{\chi}_{ab},
\end{equation}
and
\begin{equation}
(h^{ab}-N^{ab})\chi_{ab}=(\tilde{h}^{ab}-\tilde{N}^{ab})\tilde{\chi}_{ab}.
\end{equation}

This brings us to the following important proposition:
 
\begin{prop}
The expansion, shear and sheet-expansion in both regions must satisfy the following constraint on the timelike boundary hypersurface $S$,
\begin{eqnarray} 
\beta\phi-\alpha\left(\Sigma-\tfrac{2}{3}\Theta\right)&=\tilde{\beta}\tilde{\phi}-\tilde{\alpha}\left(\tilde{\Sigma}-\tfrac{2}{3}\tilde{\Theta}\right).\label{23}
\end{eqnarray}
\end{prop}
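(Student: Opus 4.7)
My plan is to prove the proposition by extracting the scalar part of the second fundamental form that lives purely on the 2-sheet. The key observation, already made in the text just before the statement, is that the inheritance of LRS-II symmetry on the matching hypersurface forces $N_{ab}=\tilde N_{ab}$ on $S$, so that the 2-sheet geometry is unambiguous. Hence one may contract the tensorial Israel-Darmois condition $\chi_{ab}=\tilde\chi_{ab}$ with $N^{ab}=\tilde N^{ab}$, which is precisely equation (M1). The whole proposition should then fall out by a single projection.

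Concretely, I would start from the expression \eqref{21} for $\chi_{ab}$, written as a sum of four tensorial pieces proportional to $u_a u_b$, $e_a e_b$, $u_{(a}e_{b)}$ and $N_{ab}$. Since $N^{ab}$ projects orthogonally to both $u^a$ and $e^a$, contraction with $N^{ab}$ kills the first three pieces entirely, regardless of how complicated their coefficients (which involve $\dot\alpha,\hat\alpha,\dot\beta,\hat\beta,\mathcal{A}$) happen to be. The only surviving contribution is the trace on the sheet, and using $N^{ab}N_{ab}=2$ one gets
\begin{equation}
N^{ab}\chi_{ab}=2\left[\alpha\left(\tfrac{1}{3}\Theta-\tfrac{1}{2}\Sigma\right)+\tfrac{1}{2}\beta\phi\right]=\beta\phi-\alpha\left(\Sigma-\tfrac{2}{3}\Theta\right).
\end{equation}
Applying the same reasoning verbatim to \eqref{22} in Region 2 and using the tilded analogues of the derivative operators yields $\tilde N^{ab}\tilde\chi_{ab}=\tilde\beta\tilde\phi-\tilde\alpha(\tilde\Sigma-\tfrac{2}{3}\tilde\Theta)$.

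Finally, combining these two identities with the matching equation \eqref{M1} gives
\begin{equation}
\beta\phi-\alpha\left(\Sigma-\tfrac{2}{3}\Theta\right)=\tilde\beta\tilde\phi-\tilde\alpha\left(\tilde\Sigma-\tfrac{2}{3}\tilde\Theta\right),
\end{equation}
which is the desired constraint \eqref{23}.

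The calculation itself is essentially bookkeeping once the correct projector is chosen, so I do not expect a genuine technical obstacle. The only point that deserves care is the justification that $N_{ab}=\tilde N_{ab}$ as tensors on $S$ (so that both $N^{ab}\chi_{ab}$ and $\tilde N^{ab}\tilde\chi_{ab}$ are the same scalar observer-independent quantity on the sheet rather than two incomparable traces); this follows from the LRS-II symmetry inherited by $S$ and is exactly the content of \eqref{a4}. With that in hand, the proof reduces to noting which structural pieces of \eqref{21} and \eqref{22} survive projection onto the sheet metric.
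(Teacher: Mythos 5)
Your proposal is correct and follows exactly the paper's route: the paper's proof simply states that one plugs \eqref{21} and \eqref{22} into \eqref{M1}, which is precisely the contraction with $N^{ab}$ that you carry out (the first three tensorial pieces are annihilated, and $N^{ab}N_{ab}=2$ gives the stated scalar). You have merely spelled out the bookkeeping that the paper leaves implicit.
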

\begin{proof}
The proof follows in a straightforward fashion by plugging in  \eqref{21} and  \eqref{22} in \eqref{M1}
\end{proof}

\subsection{Special case of spherical symmetry}
 
Let us now, consider the special case of spherical symmetry, where the 2-sheets are 2-spheres. In this case, matching the 2-dimensional metric $N_{ab}$ on the natural 2-foliations of the matching hypersurface, naturally matches the intrinsic curvature of the surface when approached from 
either side. As we know, the trace of 2-dimensional Ricci tensor on these foliations is the Gaussian curvature, therefore the Gaussian curvature of the 2-sheets must be same when calculated from either side. In other words, we must have $K=\tilde K$:
 \begin{eqnarray}
\tfrac{1}{3}\mu-\mathcal{E}-\tfrac{1}{2}\Pi+\tfrac{1}{4}\phi^2-(\tfrac{1}{3}\Theta-\tfrac{1}{2}\Sigma)^2=\tfrac{1}{3}\tilde\mu-\mathcal{\tilde E}-\tfrac{1}{2}\tilde\Pi+\tfrac{1}{4}\tilde\phi^2-(\tfrac{1}{3}\tilde\Theta-\tfrac{1}{2}\tilde\Sigma)^2\;.
\end{eqnarray}
Moreover, any scalar constructed from the Gaussian curvature and it's covariant derivative should also be continuous across the matching hypersurface. One such scalar that has a well defined physical significance is the Misner-Sharp mass (the mass within a given 2-sheet at any instant of time), which for any spherically symmetric spacetime is given as
\begin{equation}
\mathcal{M}=\frac{1}{2\sqrt{K}}\left(1-\frac{1}{4K^3}\nabla_aK\nabla^aK\right).
\end{equation}
Using the definition of Gaussian curvature and the field equations, we can immediately get
\begin{equation}
\mathcal{M}=\frac{1}{2(\sqrt{K})^3}\left(\frac{1}{3}\mu-\mathcal{E}-\frac{1}{2}\Pi\right),
\end{equation}
Since $K=\tilde K$ and $\mathcal{M}=\tilde{\mathcal{M}}$, this completes the demonstration of the following propesition:

\begin{prop}
For sperically symmetric spacetimes, the expansion, shear and sheet-expansion in the both regions must satisfy the following constraint on the timelike boundary hypersurface $S$,
\begin{equation}
\tfrac{1}{4}\phi^2-(\tfrac{1}{3}\Theta-\tfrac{1}{2}\Sigma)^2=\tfrac{1}{4}\tilde\phi^2-(\tfrac{1}{3}\tilde\Theta-\tfrac{1}{2}\tilde\Sigma)^2\;.
\label{match2}
\end{equation}
\end{prop}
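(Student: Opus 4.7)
The plan is to leverage the two identifications already established in the paragraph preceding the proposition, namely $K=\tilde K$ (continuity of the Gaussian curvature of the 2-sheets across $S$) and $\mathcal{M}=\tilde{\mathcal{M}}$ (continuity of the Misner-Sharp mass), and to eliminate the matter-sector combination $\tfrac{1}{3}\mu-\mathcal{E}-\tfrac{1}{2}\Pi$ between them. Since the whole point is to isolate a purely kinematical constraint on $\{\phi,\Theta,\Sigma\}$ from what is superficially a statement involving both geometry and matter, the natural strategy is to algebraically decouple these two sectors using the two independent matching relations.

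First I would rewrite the Misner-Sharp expression
\begin{equation}
\mathcal{M}=\frac{1}{2(\sqrt{K})^3}\left(\tfrac{1}{3}\mu-\mathcal{E}-\tfrac{1}{2}\Pi\right)
\end{equation}
and its tilded counterpart. Since $K=\tilde K$ on $S$, the prefactor $1/(2K^{3/2})$ is common on both sides, and equating $\mathcal{M}=\tilde{\mathcal{M}}$ immediately yields
\begin{equation}
\tfrac{1}{3}\mu-\mathcal{E}-\tfrac{1}{2}\Pi=\tfrac{1}{3}\tilde\mu-\mathcal{\tilde E}-\tfrac{1}{2}\tilde\Pi
\end{equation}
on $S$. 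This is the key intermediate step: the thermodynamic/Weyl combination appearing in $K$ must itself be continuous.

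Next I would substitute this identity into the full expression $K=\tilde K$, namely
\begin{equation}
\tfrac{1}{3}\mu-\mathcal{E}-\tfrac{1}{2}\Pi+\tfrac{1}{4}\phi^2-(\tfrac{1}{3}\Theta-\tfrac{1}{2}\Sigma)^2=\tfrac{1}{3}\tilde\mu-\mathcal{\tilde E}-\tfrac{1}{2}\tilde\Pi+\tfrac{1}{4}\tilde\phi^2-(\tfrac{1}{3}\tilde\Theta-\tfrac{1}{2}\tilde\Sigma)^2,
\end{equation}
(the $\Lambda$ contributions drop out under the standing assumption that the cosmological constant is the same in the two regions). The matter terms cancel term-by-term in view of the previous step, and one is left precisely with the claimed relation \reff{match2}. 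No further integration or differentiation is required; the argument is purely algebraic once the two conservation laws across $S$ are in hand.

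The only genuine obstacle is justifying the ingredients, not the algebra: one must be satisfied that both $K$ and $\mathcal{M}$ are indeed matched across $S$. Continuity of $K$ follows from the matching of the intrinsic 2-metric $N_{ab}=\tilde N_{ab}$ on the foliating 2-spheres, already invoked in Proposition 1, since $K$ is the trace of the 2-dimensional Ricci tensor of $N_{ab}$. Continuity of the Misner-Sharp mass is a standard consequence of spherical symmetry plus the first fundamental form matching (it is a geometric invariant built from $K$ and its gradient $\nabla_a K\nabla^a K$, both of which are intrinsic to $S$ given $h_{ab}=\tilde h_{ab}$). With these two facts recorded, the proof reduces to the one-line elimination sketched above.
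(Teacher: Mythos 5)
Your argument is correct and is essentially the paper's own: both establish $K=\tilde K$ from the matching of the 2-sheet metric and $\mathcal{M}=\tilde{\mathcal{M}}$ from the continuity of the Misner--Sharp mass, then use $\mathcal{M}=\tfrac{1}{2}K^{-3/2}\bigl(\tfrac{1}{3}\mu-\mathcal{E}-\tfrac{1}{2}\Pi\bigr)$ to conclude that the matter combination is itself continuous, so that it cancels out of $K=\tilde K$ leaving only the kinematical terms. Your write-up merely makes explicit the elimination step that the paper leaves implicit, including the harmless remark about the cosmological constant dropping out.
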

The above equation, together with (\ref{23}), completely specifies how the volume expansion, shear and 2-sheet expansion are related at either side of the matching hypersurface. 

\subsection{Consistent propagation of the matching constraint}

It is interesting to note the condition (\ref{23}) is a constraint on the matching hypersurface that must be satisfied at all epochs. Thus acting on (\ref{23}) with the operator $n^a\nabla_a=\tilde n^a\nabla_a$ should be identically zero. 
In other words, we must have
\begin{align}\label{a}
(\alpha u^a+\beta e^a)\nabla_{a}\left(\beta\phi-\alpha\left(\Sigma-\tfrac{2}{3}\Theta\right)\right)=(\tilde{\alpha}\tilde{u}^a+\tilde{\beta}\tilde{e}^a)\nabla_{a}\left(\tilde{\beta}\tilde{\phi}-\tilde{\alpha}\left(\tilde{\Sigma}-\tfrac{2}{3}\tilde{\Theta}\right)\right).
\end{align}
For Region 1, \eqref{a} becomes 
\begin{align}\label{b}
\alpha\dot{\beta}\phi+\alpha\beta\dot{\phi}+\beta\hat{\beta}\phi+\beta^2\hat{\phi}-\alpha^2\left(\dot{\Sigma}-\tfrac{2}{3}\dot{\Theta}\right)\nonumber\\
-\alpha\beta\left(\hat{\Sigma}-\tfrac{2}{3}\hat{\Theta}\right)-\hat{\alpha}\beta\left({\Sigma}-\tfrac{2}{3}{\Theta}\right)-\dot{\alpha}\alpha\left(\Sigma-\tfrac{2}{3}\Theta\right).
\end{align}
And for Region 2, \eqref{a} becomes
\begin{align}\label{e}
\tilde{\alpha}\mathring{\tilde{\beta}}\tilde{\phi}+\tilde{\alpha}\tilde{\beta}\mathring{\tilde{\phi}}+\tilde{\beta}\bar{\tilde{\beta}}\tilde{\phi}+\tilde{\beta}^2\bar{\tilde{\phi}}-\tilde{\alpha}^2\left(\mathring{\tilde{\Sigma}}-\tfrac{2}{3}\mathring{\tilde{\Theta}}\right)\nonumber\\
-\tilde{\alpha}\tilde{\beta}\left(\bar{\tilde{\Sigma}}-\tfrac{2}{3}\bar{\tilde{\Theta}}\right)-\tilde{\bar{\alpha}}\tilde{\beta}\left(\tilde{\Sigma}-\tfrac{2}{3}\tilde{\Theta}\right)-\mathring{\tilde{\alpha}}\tilde{\alpha}\left(\tilde{\Sigma}-\tfrac{2}{3}\tilde{\Theta}\right).
\end{align}
We now apply  the field equations obtain
\begin{equation}\label{C1}
\mathbb{X}=\tilde{\mathbb{X}},
\end{equation}
with the constraints $\beta=\pm\sqrt{1+\alpha^2}$ and $\tilde\beta=\pm\sqrt{1+\tilde\alpha^2}$. Here
\begin{eqnarray}\label{c}
\mathbb{X}&=&\alpha\dot{\beta}\phi+\alpha\beta\left[-\left(\Sigma-\tfrac{2}{3}\Theta\right)(\mathcal{A}-\tfrac{1}{2}\phi)+Q\right]+\beta\hat{\beta}\phi
\nonumber\\&&+\beta^2\left[-\tfrac{1}{2}\phi^2+\left(\tfrac{1}{3}\Theta+\Sigma\right)\left(\tfrac{2}{3}\Theta-\Sigma\right)-\tfrac{2}{3}(\mu+\Lambda)-\mathcal{E}-\tfrac{1}{2}\Pi\right]\nonumber\\&&-\alpha^2\left[-\mathcal{A}\phi+2\left(\tfrac{1}{3}\Theta-\tfrac{1}{2}\Sigma\right)^2+\tfrac{1}{3}(\mu+3p-2\Lambda)-\mathcal{E}+\tfrac{1}{2}\Pi\right]\nonumber\\&&+\alpha\beta(\tfrac{3}{2}\phi\Sigma+Q)-\hat{\alpha}\beta\left(\Sigma-\tfrac{2}{3}\Theta\right)-\dot{\alpha}\alpha\left(\Sigma-\tfrac{2}{3}\Theta\right),
\end{eqnarray}
and $\tilde{\mathbb{X}}$ is defined likewise. Thus we see that the scalars equations  (\ref{23}), and  (\ref{C1}), completely defines all the conditions that needs to be satisfied for a consistent matching for all epochs. 

\section{Matching Conditions for LRS-II spacetimes: Spacelike matching surface}\label{five}

In this case, we are looking at the scenario, where two different patches of spacetime is matched across a spacelike hypersurface. Although we generally do not have scenarios where this is applied, we just give the equations for completeness. 
Since in this case $n_a=\alpha u_{a}+\beta e_{a}$ and hence $\tilde n_a=\tilde\alpha \tilde u_{a}+\tilde\beta \tilde e_{a}$ is timelike, we have $\beta=\pm\sqrt{-1+\alpha^2}$ and $\tilde\beta=\pm\sqrt{-1+\tilde\alpha^2}$.
The first fundamental form for Region1 is given by
\begin{align}
h_{ab}&= g_{ab}+n_{a}n_{b}\label{31}\\
&=-(1-\alpha^{2})u_{a}u_{b}+(1+\beta^{2})e_{a}e_{b}+\alpha\beta u_{a}e_{b}+\alpha\beta e_{a}u_{b}+N_{ab}.\label{32}
\end{align}
Likewise the first fundamental form in Region 2 is	
\begin{align}
\tilde{h}_{ab}&=\tilde{g}_{ab}+\tilde{n}_{a}\tilde{n}_{b}\label{33}\\
&=-(1-\tilde{\alpha}^{2})\tilde{u}_{a}\tilde{u}_{b}+(1+\tilde{\beta}^{2})\tilde{e}_{a}\tilde{e}_{b}+\tilde{\alpha}\tilde{\beta }\tilde{u}_{a}\tilde{e}_{b}+\tilde{\alpha}\tilde{\beta } \tilde{e}_{a}\tilde{u}_{b}+\tilde{N}_{ab}.\label{34}
\end{align}
Similarly the second fundamental form is given as 
\begin{eqnarray}
\chi_{ab}&=&u_{a}u_{b}\left[-(1-\alpha^{2})^2\dot{\alpha}-\alpha\beta(1-\alpha^{2})\dot{\beta}+\alpha\beta(1-\alpha^{2})\hat{\alpha}\right.\nonumber\\
&&\left.+\alpha^{2}\beta^{2}\hat{\beta}-\beta \mathcal{A}(1-\alpha^{2})+\alpha\beta^{2}\left(\tfrac{1}{3}\Theta+\Sigma\right)\right]+e_{a}e_{b}\left[-\alpha^{2}\beta^{2}\dot{\alpha}\right.\nonumber\\
&&\left.+\alpha\beta(1+\beta^{2})\dot{\beta}-\alpha\beta(1+\beta^{2})\hat{\alpha}+(1+\beta^{2})^{2}\hat{\beta}+\alpha^{2}\beta\mathcal{A}\right.\nonumber\\
&&\left.+\alpha(1+\beta^{2})(\tfrac{1}{3}\Theta+\Sigma)\right]+u_{(a}e_{b)}\left[2\alpha\beta(1-\alpha^{2})\dot{\alpha}+2\alpha\beta(1+\beta^2)\hat{\beta}\right.\nonumber\\
&&\left.+((1-\alpha^{2})(1+\beta^{2})-\alpha^2\beta^2)\hat{\alpha}+(\alpha^2\beta^2-(1-\alpha^{2})(1+\beta^{2}))\dot{\beta}\right.\nonumber\\
&&\left.+(\tfrac{1}{3}\Theta+\Sigma)(\beta(1+\beta^2)+\alpha^2\beta)+\alpha\mathcal{A}(\beta^2-(1-\alpha^2))\right]\nonumber\\
&&+N_{ab}\left[\alpha\left(\tfrac{1}{3}\Theta-\tfrac{1}{2}\Sigma\right)+\tfrac{1}{2}\beta\phi\right]
\end{eqnarray}
and $\tilde{\chi}_{ab}$ is given likewise. Thus the matching condition becomes
\begin{align}
\beta\phi-\alpha\left(\Sigma-\tfrac{2}{3}\Theta\right)&=\tilde{\beta}\tilde{\phi}-\tilde{\alpha}\left(\tilde{\Sigma}-\tfrac{2}{3}\tilde{\Theta}\right)\label{23a},
\end{align}
with the constraints  $\beta=\pm\sqrt{-1+\alpha^2}$ and $\tilde\beta=\pm\sqrt{-1+\tilde\alpha^2}$.  It is interesting to note that the consistency of the constraint remains same as equation (\ref{C1}), with the new relation between $\alpha$ and $\beta$. Thus equations (\ref{23a}) and (\ref{C1}) with $\beta=\pm\sqrt{-1+\alpha^2}$ and $\tilde\beta=\pm\sqrt{-1+\tilde\alpha^2}$ completely determines the matching conditions and their consistancy.

\section{A well known example: Matching a radiating star with a Vaidya exterior}

To illustrate the advantage of the semi-tetrad 1+1+2 matching equations, let us revisit the well known scenario of matching a radiating and collapsing spherically symmetric star, having a comoving boundary, with a Vaidya exterior. 
Let the metric in the interior spacetime (in the comoving coordinates ($t,r,\theta,\phi$)) be given as
\be
ds_1^2=-A(t,r)dt^2+B(t,r)dr^2+R^2(t,r)(d\theta^2+\sin^2(\theta)d\phi^2)\;.
\ee
Here $R(t,r)$ is the area radius of the collapsing 2-sheets and hence we have
\be
K=\frac{1}{R^2(t,r)}.
\ee
We consider the boundary of the star to be the comoving shell labelled by $r=r_b$ in the interior spacetime. Let this star be matched to the Vaidya exterior, with the metric 
\be
ds_2^2= -\left(1-\frac{2m(v)}{r_v}\right)dv^2-2dvdr_v+r_v^2(d\theta^2+\sin^2(\theta)d\phi^2)\;.
\ee
Here $v$ is the exploding null coordinate and $r_v$ is the Vaidya radius. The boundary as seen from the exterior spacetime is given as $r_{vb}=r_{vb}(v)$. The unit vectors in the $[u,e]$ plane for the Vaidya spacetime is given as
\be
\tilde u^a=\left(1-\frac{2m(v)}{r_v}\right)^{-1/2}\left(\frac{\partial}{\partial v}\right)^a,
\ee
and
\be
\tilde e^a=-\left(1-\frac{2m(v)}{r_v}\right)^{-1/2}\left(\frac{\partial}{\partial v}\right)^a+\left(1-\frac{2m(v)}{r_v}\right)^{1/2}\left(\frac{\partial}{\partial r_v}\right)^a.
\ee
It can be easily checked that for the above unit vectors, we have the following
\begin{eqnarray}
\tilde\Sigma-\frac23\tilde\Theta&=&0,\\
\tilde\mu=3\tilde p=\tilde Q&=&\frac32\tilde\Pi,\\
\mathring{v}=-\bar{v}.\label{vdot}
\end{eqnarray}
Now, since the boundary is comoving in the interior spacetime, the normal to the boundary is purely along the $e$-direction. Therefore we have $\alpha=0$ and $\beta=1$. In the exterior spacetime the normal lying in the  $[u,e]$ plane is null, and therefore we must have $\tilde{n_a}\tilde{n^a}=0$. This can then be normalised to $\tilde\alpha=\tilde\beta=1/\sqrt{2}$. Plugging these in (\ref{23}) and (\ref{match2}) we get for the boundary
\be\label{match3}
\frac{\phi}{(\Sigma-\tfrac23\Theta)}=1\;,
\ee
that must be satisfied as we approach the boundary in the interior spacetime. It can again be easily calculated, that the Misner Sharp mass of the Vaidya exterior is given as 
\be
\tilde{\mathcal{M}}=m(v).
\ee
Now, here comes an important physical observation:  at the boundary, the rate of change of Misner-Sharp mass, along the fluid flow lines must be same on both sides. In other words, the total mass lost in the interior spacetime must be radiated away along the outgoing null geodesics of Vaidya exterior. Since at the exterior spacetime the boundary is comoving, therefore this will give, 
\be
u^a\nabla_a{\mathcal{M}}=\left(\frac{1}{\sqrt2}\tilde{u^a}+\frac{1}{\sqrt2}\tilde{e^a}\right)\nabla_a m(v).
\ee
Using (\ref{vdot}) and the field equations, the above expression immediately simplifies to
\be\label{match4}
\frac{\phi}{(\Sigma-\tfrac23\Theta)}=\frac{Q}{p}\;,
\ee
which using (\ref{match3}) gives us the important result as found by Santos: At the comoving boundary of a collapsing radiating star, we must have
\be
p=Q,
\ee
that is, the isotropic pressure must be equal to the radial heat flux at the comoving boundary, if we need to match the star with an exploding Vaidya exterior. From, the above equation it is clear that if there is no heat flux in the interior spacetime, the pressure at the comoving boundary {\em must} be zero, if the interior is matched to Vaidya. On the contrary, zero pressure for Vaidya spacetime necessarily implies $m(v)_{,v}=0$, or constant Misner-Sharp mass., which in turn implies that the exterior spacetime must be Schwarzschild in that case. When the pressures in the radiating star is non-isotropic, that is the radial pressure $p_r$ is not equal to the tangential pressure $p_{\theta}$, the above condition can be written as
\be
p_r=Q-\frac23\Delta,
\ee
where $\Delta=p_r-p_{\theta}$, is the anisotropy parameter. 

 \section{Discussion}\label{five}
 
 In this paper we have transparently showed the following: The Israel-Darmois matching conditions across a general hypersurface for LRS-II spacetimes simplifies to three scalar equations in the 1+1+2 semitetrad formalism. These scalar equations give us the relations between the geometrical and thermodynamic quantities of both sides of the hypersurface and also relates these to the dynamics of the hypersurface itself. We derived the equations for cases when the normal to these hypersurface is spacelike and also timelike. Note that a careful matching of these two cases will then generate the matching condition across a null hypersurface which is more complicated, which we have not considered in this paper. \\
 
Writing the matching conditions in terms of the scalar equation in the semitetrad formalism has a number of advantages. First and foremost, this gives a direct relation between the components of energy momentum tensors of the spacetime patches on either side of the matching hypersurface. This is very useful while matching spherically symmetric spacetimes across a stellar surface (for example), where the spacetime just outside a stellar surface is not vacuum. Also these equations can be used to model multi-regions in spherically symmetric stellar structures or even gravastars, where the energy momentum tensor of each region is different from the other. \\

Our result gives another important realisation. One patch of given spacetime with LRS-II symmetry can in principle be matched with a large number of different LRS-II spacetimes, by carefully choosing the dynamics of the matching hypersurface.  Thus, if we consider the surface of a collapsing spherical star to be non-comoving, we can match the collapsing star with a number of different exterior spacetimes and that will lead the collapse to have different end states.

\section {Acknowledgement \label{acknowledge}}
PK and RG are indebted to the National Research Foundation and the University of KwaZulu-Natal for financial support.
SDM acknowledges that this work is based upon research supported by the South African Research Chair Initiative of the Department of Science and Technology.  

\thebibliography{99}

\bibitem{Bonnor} W. B. Bonnor, A. K. G. de Oliviera and N. O. Santos, Phys. Reps. {\bf 181}, 269 (1989).
\bibitem {Herr1} L. Herrera and N.O. Santos, Phys. Rep. {\bf 286}, 53 (1997). 
\bibitem{Herr2} A. Di Prisco, L. Herrera, G. Le Dermat, M. A. H. MacCallum and N. O. Santos, Phys. Rev. D {\bf 76}, 064017 (2007).
\bibitem{Santos} N. O. Santos, Mon. Not. R. Astron. Soc. {\bf 216}, 403 (1985). 
\bibitem{Herr3} W. Barreto, L. Herrera and N.O. Santos, Astrophys. Space. Sci. {\bf 187}, 271 (1992).
\bibitem{Chan} R. Chan,  Mon. Not. R. Astron. Soc. {\bf 316}, 588 (2000). 
\bibitem{Herr4}  L. Herrera and J. Martinez, Gen. Relativ. Gravit. {\bf 30}, 445 (1998). 
\bibitem{Gov1} N. F. Naidu, M. Govender and K. S. Govinder, Int. J. Mod. Phys. D {\bf 15}, 1053 (2006). 
\bibitem{AKG} A. K. G. de Oliviera and N. O. Santos, Mon. Not. R. Astron. Soc. {\bf 312}, 640 (1987). 
\bibitem{Ban} A. Banerjee and S. B. Choudhury, Gen. Relativ. Gravit. {\bf 21}, 785 (1989).
\bibitem{Tik} R. Tikekar and L. K. Patel, Pramana-J. Phys. {\bf 39}, 17 (1992). 
\bibitem{Gov2} S. D. Maharaj and M. Govender, Pramana-J. Phys. {\bf 54}, 715 (2000). 
\bibitem{Gov3} S. D. Maharaj and G. Govender and M. Govender, Gen. Relativ. Gravit. {\bf 44}, 1089 (2012).
\bibitem{Iva1} B. V. Ivanov, Astrophys. Space. Sci. {\bf 361}, 18 (2016).
\bibitem{Iva2} B. V. Ivanov, Int. J. Mod. Phys. D {\bf 25}, 1650049 (2016). 
\bibitem{Iva3} B. V. Ivanov, Eur. Phys. J. C. {\bf 79}, 255 (2019). 
\bibitem {Maharaj1} A. B. Mahomed, S. D. Maharaj and R. Narain, Eur. Phys. J. Plus. {\bf 135}, 351 (2020). 
\bibitem {Maharaj2} M. Govender, A. Maharaj, K. Newton Singh and N. Pant, Mod. Phys. Lett. A {\bf35}, 2050164 (2020).
\bibitem{Abebe} G. Z. Abebe and S. D. Maharaj, Eur. Phys. J. C. {\bf 79}, 849 (2019).

\bibitem{Darmois} G. Darmois, Memorial de Sciences Math\textsc{\char13}ematiques, Fascicule XXV, {\it Les equations
de la gravitation einsteinienne}, p1 (1927).

\bibitem{Lich} A. Lechnerowicz, {\it  Th\textsc{\char13}eories Relativistes de la Gravitation et de l\textsc{\char13}Electromagn\textsc{\char13}etisme},
Masson, Paris (1955).

\bibitem{Israel} W. Israel, Nuovo Cim. B {\bf 44},1 (1966).

\bibitem{Clarke} C.J.S. Clarke and T. Dray,  Class. Quantum Grav. \textbf{4}, 265 (1987).

\bibitem{Seno}M. Mars and J.M.M. Senovilla, Class. Quantum Grav. \textbf{10},  1865 (1993).

\bibitem{Fayos} F. Fayos, J.M.M. Senovilla and R. Torres, Phys. Rev. D \textbf{54}, 4862 (1996).

\bibitem{Clarkson1}G. Betschart and C.A. Clarkson, Class. Quantum Grav. \textbf{21}, 5587 (2004).

\bibitem{ElstEllis}H. van Elst and G.F.R. Ellis, Class. Quantum Grav. \textbf{13}, 1099 (1996).

\bibitem{Ellis} G.F.R. Ellis and  H. van Elst,\textit{Cosmological Models},  Carg\`{e}se Lectures 1998, in Theoretical and Observational Cosmology, Ed. M Lachze-Rey, (Dordrecht: Kluwer 1999), 1. [arXiv:gr-qc/9812046].

\bibitem{Acquaviva} G. Acquaviva,  G.F.R. Ellis, R. Goswami and  A.I.M. Hamid, Phys. Rev. D  \textbf{91}, 064017 (2015).

\bibitem{HE} S.W. Hawking and G. F. R. Ellis, \textit{ The large scale structure of spacetime}, Cambridge University Press, Cambridge  (1973).

\bibitem{Katzj} D. S. Goldwirth and J. Katz, Class. Quantum Grav, \textbf{12}, 769 (1995).

\bibitem{Misner} W. C. Hern\'andez and C. W. Misner, Astrophys. J. \textbf{143}, 452 (1966).

 \end{document}